\pgfplotsset{compat=newest}
\newcommand{\bbr}{\mathbb{R}}
\newcounter{modcount}
\newcommand{\modulo}[2]{%
\setcounter{modcount}{#1}\relax
\ifnum\value{modcount}<#2\relax
\else\relax
\addtocounter{modcount}{-#2}\relax
\modulo{\value{modcount}}{#2}\relax
\fi}
\newcommand{\tablepictures}[4][c]{\begin{tabular}[#1]{@{}c@{}}#2\vspace{0.5cm}\\(\alph{#4}) #3\end{tabular}}
\newcounter{gridsearch}
\newcommand{\tabpic}[2]{
    \stepcounter{gridsearch}
    \modulo{\thegridsearch}{2}
    \ifnum\value{modcount}=0
        \tablepictures[t]{#1}{#2}{gridsearch}\\[2.0cm]
    \else
        \tablepictures[t]{#1}{#2}{gridsearch}&~&
    \fi
}
\newtheorem{lemma}{Lemma}[section]
\newtheorem{proposition}[lemma]{Proposition}
\newtheorem{corollary}[lemma]{Corollary}
\newtheorem{definition}[lemma]{Definition}
\newtheorem{example1}[lemma]{Example}
\newtheorem{rem1}[lemma]{Remark}
\newtheorem{assumption}[lemma]{Assumption}
\newtheorem{alg1}[lemma]{Algorithm}
\newtheorem{me1}[lemma]{Mechanism}
\newenvironment{remark}{\begin{rem1}\rm}{\end{rem1}}
\newenvironment{example}{\begin{example1}\rm}{\end{example1}}
\newcommand\ind[1]{\mathbb{I}_{\{#1\}}}
\begin{document}

\title{Clearing prices under margin calls and the short squeeze}
\author{Zachary Feinstein \thanks{Stevens Institute of Technology, School of Business, Hoboken, NJ 07030, USA, {\tt zfeinste@stevens.edu}}}
\date{\today}
\maketitle
\abstract{
In this paper, we propose a clearing model for prices in a financial market due to margin calls on short sold assets.  In doing so, we construct an explicit formulation for the prices that would result immediately following asset purchases and a margin call.  The key result of this work is the determination of a threshold short interest ratio which, if exceeded, results in the discontinuity of the clearing prices due to a feedback loop.  
~\\[1em]
\textbf{Key words:} short squeeze, margin call, market clearing
}

\section{Introduction}\label{sec:intro}

In early 2021, the online community r/WallStreetBets began a targeted campaign of retail investors to purchase, or otherwise increase the price, of a few specific stocks.  Most notably, this investment campaign led to large price swings in GameStop Corp.\ (GME).  This, likewise, caused the distress of certain hedge funds such as Melvin Capital Management LP due to their large short positions in this stock.  The purpose of this work is to provide a model of prices for assets with large short positions; in particular, those positions are subject to margin calls and can face a short squeeze.

Margin calls and the short squeeze are, in some sense, the mirror image of a fire sale and traditional price-mediated contagion.  That, more traditional, setting considers the situation in which investors sell assets to satisfy regulatory requirements in a stress scenario; those asset liquidations cause the value of assets to decrease and, as such, increase the initial stress further.  This feedback effect can lead to significant price drops.  Fire sales and price-mediated contagion has been studied in, e.g.,~\cite{CFS05,AFM16,feinstein2015illiquid,BW19}.  In contrast, in this work we are focused on a setting in which price increases have feedback effects resulting in greater asset purchasing and even greater price increases.

This work is related to other recent papers studying the short squeeze episode of 2021.  Similar to the results herein, we refer the interested reader to, e.g.,~\cite{guimaraes2021short} for an economic model of a short squeeze which finds that short selling can lead to an asset bubble by causing a spike in asset demand (in order to close the position).  Empirically the 2021 short squeeze event was investigated within, e.g.,~\cite{hasso2021participated,klein2021note}.

There are two primary innovations provided by this work.  First, we provide an algorithm for finding the realized clearing prices for an asset with non-negligible number of short sales subject to margin calls.  This formulation allows for counterfactual testing of different short selling situations as well as sensitivity of the clearing solution to the various parameters of the system.  Analytical results are provided in the specific case of a linear inverse demand function.  Second, as a direct consequence of this analytical formulation, we find a threshold short interest ratio which determines whether margin calls lead to a short squeeze and a resultant discontinuity in the clearing prices.  As such, this threshold short interest ratio can be used to determine unstable market configurations.  This threshold can be utilized by investors who wish to target heavily short sold assets; similarly regulators can utilize such a result to determine short selling constraints that promote the tradeoffs between market efficiency and price stability. 

The organization of this work is as follows.  In Section~\ref{sec:margin}, we provide background information on short selling and a simple model for margin calls on such obligations.  This is extended in Section~\ref{sec:clearing} in which we determine the prices which clear a financial market with external investors and the possibility for margin calls on short sold assets.  Section~\ref{sec:squeeze} considers the implication of these clearing prices.  In particular, a threshold short interest ratio is determined below which the prices are continuous with respect to the actions of external investors but above which prices can jump.  This model is then tested against data from early 2021 in two case studies of stocks targeted by the online community r/WallStreetBets in Section~\ref{sec:cs}; a final speculative case study is presented for a stock that is heavily short sold at the time of this writing.  
The proofs of all results are provided within the appendix.

\section{Margin calls on short sales}\label{sec:margin}
Notationally, let $S > 0$ be the total shares short sold by some financial institution at the initial price of $\bar p > 0$.  On these shares, the institution has posted an initial margin of $M = (1+\alpha_0)S\bar p$ for some $\alpha_0 > 0$; that is, the short selling institution must post all proceeds from the original sale of the asset as well as an additional $\alpha_0$ proportion in cash.  Within the United States, this initial margin is set at $\alpha_0 = 50\%$ by Federal Reserve Board Regulation T.  This margin is to guarantee that the short selling institution will honor its agreement to return the $S$ assets at some future date.  For simplicity, we will assume that there is no interest charged on the short selling firm for borrowing the $S$ shares.

As prices change from $\bar p$ to $p$, the short selling institution must guarantee that they retain at least $(1+\mu)Sp$ in the margin account. This level $\mu \in (0,\alpha_0)$ is called the maintenance margin.  Within the United States, this maintenance margin is set at $\mu \geq 25\%$ (see, e.g.,~Federal Reserve Board Regulation T).  For the purposes of this work, we ignore the possibility of the firm withdrawing holdings from the margin account as the prices drop.  Therefore, the margin account is unaffected so long as $M \geq (1+\mu)Sp$.

There are two basic ways in which the short selling institution may satisfy this maintenance margin when $M < (1+\mu)Sp$:
\begin{enumerate}
\item the firm can post additional cash in the amount of $[(1+\mu)Sp - M]^+$ so that the margin account satisfies the maintenance margin; or
\item the firm returns $\Gamma$ shares of the asset so that $M \geq (1+\mu)(S-\Gamma)p$, i.e., $\Gamma := [S - \frac{M}{(1+\mu)p}]^+$ as the minimal required shares to return.
\end{enumerate}
Within this work we assume that the firm chooses to solely return shares rather than post additional cash as it has the lower cost as $\Gamma p = [Sp - \frac{M}{1+\mu}]^+ \leq [(1+\mu)Sp - M]^+$.

\begin{remark}\label{rem:margin}
The proposed clearing model presented below can be adjusted to account for the firm to both post cash and return shares simultaneously to satisfy the margin requirement.  Specifically, any posted cash can be included in the margin account $M$ which is then used to determine the resulting assets purchased and clearing prices.
\end{remark}

\section{Clearing prices}\label{sec:clearing}
As introduced in the prior section, as prices rise the margin requirements may require a firm that has short sold assets to purchase assets back.  Within this section, we are interested in the dynamics that prices take.  In particular, we are interested in how prices rise with asset purchases -- notably such constructions will not depend on which market participant is transacting.  As utilized within the fire sale literature, we will consider an inverse demand function for this purpose, i.e., if $x \geq 0$ assets are purchased in the financial market then the resulting price is $f(x)$.
\begin{assumption}\label{ass:idf}
The inverse demand function $f: \bbr_+ \to \bbr_{++}$ is a twice differentiable, strictly increasing, and concave function with $f(0) = p_0 > 0$.  Furthermore, we assume this initial price $p_0$ is high enough so that $(1+\mu)Sp_0 \leq M$, i.e., no margin call occurs without outside interventions; for notation let $\alpha = \frac{M - Sp_0}{Sp_0} \in [\mu,\infty)$, i.e., $M = (1+\alpha)Sp_0$.
\end{assumption}
\begin{example}\label{ex:linear}
Throughout this work, we will often consider a linear inverse demand function to demonstrate the specifics of the provided model.  That is, if $x \geq 0$ assets are purchased in the financial market then the resulting price is:
\[f(x) := p_0(1 + b x)\]
for market impact parameter $b > 0$.  This linear inverse demand function is prominently utilized in the fire sale literature; we refer the interested reader to, e.g.,~\cite{GLT15,CS19,feinstein2020interbank}.  
\end{example}

Before continuing, we wish to summarize notation utilized in this work.  In particular, we consider all values in physical units of assets and cash.  That is, $S$ provides the total number of shares of the asset that have been sold short.  We wish to highlight that, in practice, the number of short sold shares is often normalized by the average daily volume [ADV] $V > 0$; this ratio $s := S/V$ is often called the ``short interest ratio'' or the ``days-to-cover ratio.'' The current margin account is given by $M = (1+\alpha)Sp_0$.  Finally, the external capital purchases are provided in cash valuation and denoted by $C$.

With the external capital purchases $C > 0$, and the possibility of asset purchases to satisfy the margin requirements, the price $p$ must satisfy the clearing equation
\begin{equation}\label{eq:clearing}
p = f\left(\frac{C}{p} + \left[S - \frac{M}{(1+\mu)p}\right]^+\right).
\end{equation}
That is, the price must satisfy an equilibrium with the number of assets being purchased externally to the short selling institution ($C/p$) and those purchased to satisfy the margin requirements ($[S - \frac{M}{(1+\mu)p}]^+$).
The following proposition guarantees there exists some clearing price.  In fact, as provided in Lemma~\ref{lemma:p} below, there exists a \emph{realized} clearing price for which we can give an explicit formulation.
\begin{proposition}\label{prop:exist}
There exists some clearing price $p \in [p_0,f(C/p_0+S)]$ satisfying~\eqref{eq:clearing}.
\end{proposition}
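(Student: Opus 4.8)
The plan is to recast the clearing condition~\eqref{eq:clearing} as a fixed-point problem for the self-map
\[
g(p) := f\!\left(\frac{C}{p} + \left[S - \frac{M}{(1+\mu)p}\right]^+\right),
\]
and then to apply the intermediate value theorem on the interval $I := [1,\,f(C+S)]$. A clearing price is precisely a fixed point $p = g(p)$, so it suffices to produce one such point inside $I$.

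First I would check that $g$ is continuous on $I$. Since every $p \in I$ satisfies $p \geq 1 > 0$, the maps $p \mapsto C/p$ and $p \mapsto M/((1+\mu)p)$ are continuous and bounded there, so the singularity at $p = 0$ is harmlessly avoided; the positive-part map $x \mapsto [x]^+$ is continuous, and $f$ is affine with $b > 0$, hence continuous and strictly increasing. Composition then yields continuity of $g$ on $I$.

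Next I would show that $g$ maps $I$ into itself. For the lower bound, the argument of $f$ is nonnegative (since $C, p > 0$ and $[\,\cdot\,]^+ \geq 0$), so $g(p) \geq f(0) = 1$. For the upper bound, use $p \geq 1$ to obtain $C/p \leq C$, and discard the subtracted term to obtain $\left[S - \frac{M}{(1+\mu)p}\right]^+ \leq S$; hence the argument of $f$ is at most $C + S$, and monotonicity of $f$ gives $g(p) \leq f(C+S)$. Therefore $g(I) \subseteq I$.

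Finally, existence of a fixed point follows by applying the intermediate value theorem to $h(p) := g(p) - p$ on $I$: the self-map property gives $h(1) = g(1) - 1 \geq 0$ and $h(f(C+S)) = g(f(C+S)) - f(C+S) \leq 0$, so continuity of $h$ forces a root $p^\ast \in I$, which is the desired clearing price. There is essentially no hard step in this argument; the only point demanding care is the continuity of $g$ across the kink of the positive part together with the avoidance of the singularity of $1/p$ at the origin, and both are controlled simply by restricting attention to $p \geq 1$.
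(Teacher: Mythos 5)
Your proposal is correct and follows essentially the same route as the paper: both set up the map $p \mapsto f\bigl(\tfrac{C}{p} + [S - \tfrac{M}{(1+\mu)p}]^+\bigr)$ as a continuous self-map of $[1, f(C+S)]$ and extract a fixed point, the paper citing Brouwer's theorem where you spell out the equivalent one-dimensional intermediate value argument. If anything, you are slightly more careful in verifying the self-map property at every $p$ in the interval rather than only at the endpoints.
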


Now, we wish to consider an explicit construction for a clearing price $p^*$ of~\eqref{eq:clearing} which is \emph{realized} in the sense that no assets are purchased that are not forced to occur.  As such, this clearing price can be constructed as the result of the fictitious margin call notion (akin to the fictitious default algorithm of~\cite{EN01,RV13}) or a t\^atonnement process (as in, e.g.,~\cite{feinstein2017currency,feinstein2020interbank}).
\begin{lemma}\label{lemma:p}
There exists a \emph{unique} clearing price $p^*$ to~\eqref{eq:clearing} satisfying the following algorithm:
\begin{enumerate}
\item determine the \emph{unique} positive price assuming no margin call is required, i.e., $p^* = f(\frac{C}{p^*})$ with $p^* > 0$; 
\item if $(1+\mu)Sp^* \leq M$ then terminate and report $p^*$;
\item if $(1+\mu)Sp^* > M$ then determine the \emph{unique} price resulting in a margin call, i.e., $p^* = f(S + \left[C - \frac{M}{1+\mu}\right]/p^*)$ with $(1+\mu)Sp^* > M$. 
\end{enumerate}
Furthermore, no margin call is required if and only if 
\begin{equation}\label{eq:C*}
C \leq \frac{(1+\alpha)p_0}{1+\mu}f^{-1}\left(\frac{(1+\alpha)p_0}{1+\mu}\right) =: C^*.
\end{equation}
\end{lemma}

The threshold $C^* > 0$ defined in~\eqref{eq:C*} provides the amount of external capital purchases necessary to trigger a margin call.  Within that formulation, $\frac{1+\alpha}{1+\mu}p_0$ is exactly the price point such that the margin call is triggered.  Notably, these thresholds do \emph{not} depend on the number of short sold shares $S$ but only on the relative amount of margin posted $\alpha$ and the maintenance margin $\mu$.

\begin{example}\label{ex:linear-p}
Consider the linear inverse demand function of Example~\ref{ex:linear}. The unique clearing price $p^*$ to~\eqref{eq:clearing} constructed from the algorithm of Lemma~\ref{lemma:p} takes value:\footnote{The algebraic details of this computation are provided within Appendix~A of the preprint version of this work available at \url{https://arxiv.org/pdf/2102.02176v3.pdf}.}
\begin{align}
\label{eq:p*-1} p^* &= p_0 \times \begin{cases} \frac{1 + \sqrt{1 + 4bC/p_0}}{2} &\text{if } C \leq C^* := \frac{p_0}{b}\frac{1+\alpha}{1+\mu}\left[\frac{\alpha-\mu}{1+\mu}\right]\\
    \frac{1 + bS + \sqrt{(1 + bS)^2 + 4b\left[C - \frac{M}{1+\mu}\right]/p_0}}{2} &\text{if } C > C^*. \end{cases} 
\end{align}
\end{example}

To conclude our discussion of the clearing prices, we wish to consider the sensitivity of these prices $p^*$ on the external capital purchases $C$ and the number of short sold shares $S$.  
In doing so, within this proposition, we only study the cases where $C \neq C^*$; that is, we only study the sensitivity of the clearing prices when (strictly) no margin call is required or when there is a margin call.  The setting in which $C = C^*$ is studied in depth within Section~\ref{sec:squeeze} below.
\begin{proposition}\label{prop:sensitivity}
Consider the clearing price $p^*$ constructed as in Lemma~\ref{lemma:p} and assume $C \neq C^*$. Then $p^*$ is nondecreasing in the external capital purchases $C$ and the number of short sold shares $S$ with sensitivities:
\begin{align*}
\frac{\partial}{\partial C} p^* &= \left.\begin{cases} \frac{f'(\frac{C}{p^*})}{p^* + \frac{C}{p^*} f'(\frac{C}{p^*})} &\text{if } C < C^* \\ 
    \frac{f'(S + \left[C - \frac{M}{1+\mu}\right]/p^*)}{p^* + \left(\left[C - \frac{M}{1+\mu}\right]/p^*\right)f'(S + \left[C - \frac{M}{1+\mu}\right]/p^*)} & \text{if } C > C^* \end{cases} \right\} > 0 \\
\frac{\partial}{\partial S} p^* &= \left. \begin{cases} 0 &\text{if } C < C^* \\ 
    \frac{\left[p^* - \frac{(1+\alpha)p_0}{1+\mu}\right] f'(S + \left[C - \frac{M}{1+\mu}\right]/p^*)}{p^* + \left(\left[C - \frac{M}{1+\mu}\right]/p^*\right)f'(S + \left[C - \frac{M}{1+\mu}\right]/p^*)} & \text{if } C > C^* \end{cases} \right\} \geq 0.
\end{align*}
\end{proposition}


\section{The short squeeze}\label{sec:squeeze}
For the purposes of this section, we will focus solely on the clearing price constructed in Lemma~\ref{lemma:p}.  
In particular, we will focus on the change in prices at the cutoff threshold for the margin call $C^*$ as provided in~\eqref{eq:C*}.
We wish to highlight that the threshold for no margin calls to occur is independent of the size of the short sale as evidenced from the form of $C^*$.

\begin{definition}\label{defn:squeeze}
The \emph{size of the short squeeze} is the difference between the price with a margin call with that without a margin call at external capital purchase of $C^*$, i.e.,
\[\delta := \lim_{C \searrow C^*} p^*(C) - \lim_{C \nearrow C^*} p^*(C)\]
with explicit dependence of the clearing price on capital purchases with $p^*$ constructed as in Lemma~\ref{lemma:p}. 
\end{definition}
\begin{remark}\label{rem:squeeze-lower}
The limiting clearing price \emph{without} a margin call at the external capital purchase of $C^*$ is equal to $\lim_{C \nearrow C^*} p^*(C) = \frac{1+\alpha}{1+\mu} p_0$ by construction of $C^*$.
\end{remark}

\begin{lemma}\label{lemma:squeeze}
The size of the short squeeze is strictly positive $\delta > 0$ if and only if $S > S^* := f^{-1}(\frac{(1+\alpha)p_0}{1+\mu}) + \frac{(1+\alpha)p_0}{(1+\mu) f'(f^{-1}(\frac{(1+\alpha)p_0}{1+\mu}))}$ physical shares of the asset have been short sold.
\end{lemma}

Before studying the size of the short squeeze under the linear inverse demand function scenario in Example~\ref{ex:linear-squeeze}, we wish to give a few comments regarding the threshold value $S^*$ proposed in Lemma~\ref{lemma:squeeze}.  
First, consider the form of $S^*$.  Recall that $\frac{1+\alpha}{1+\mu}p_0$ is the threshold price for a margin call.  Therefore $f^{-1}(\frac{1+\alpha}{1+\mu}p_0)$ provides exactly the number of shares of the asset to be purchased to reach a value of $C^*$ (i.e., to be on the cusp of triggering the margin call).  Likewise the value $f'(f^{-1}(\frac{1+\alpha}{1+\mu}p_0))$ provides the (marginal) market impact after purchasing $C^*$ worth of shares of the asset. Taken together, this threshold short sold volume $S^*$ is the sum of the number of assets purchased externally by $C^*$ (given by $f^{-1}(\frac{(1+\alpha)p_0}{1+\mu})$) and the number of shares that would need to be purchased given the ``current'' market impact in order to reach the threshold price (given by $\frac{(1+\alpha)p_0}{(1+\mu) f'(f^{-1}(\frac{(1+\alpha)p_0}{1+\mu}))}$). 

Furthermore, we find that the value $S^*$ proposed in Lemma~\ref{lemma:squeeze} is of high importance due to its relevance with financial stability.  Specifically, if the number of short sold shares $S$ is less than $S^*$ then the realized clearing price $p^*$ from Lemma~\ref{lemma:p} is continuous in the external capital purchases $C$. This means that any marginal change in the system parameters results in only a marginal change in the clearing price, i.e., market stability is achieved.  However, if the number of short sold shares exceeds $S^*$, then at $C^*$ external capital purchases market behavior becomes wholly unstable and a marginal change in system parameters can result in a drastically different clearing price (as encoded by, e.g., the size of the short squeeze $\delta$).  

From a regulatory perspective, this threshold short sales provide a tradeoff between the financial stability and market efficiency (see, e.g., \cite{bris2007efficiency,saffi2011price}).  On the extreme cases, disallowing any short selling is harmful for market efficiency and accurate price discovery (\cite{bris2007efficiency,saffi2011price}) but allowing for unlimited short selling can result in a strictly positive short squeeze if $S > S^*$.  As the short squeeze is a rare event, placing such a limit on short sales would both promote financial stability and allow for much of the market efficiency gains.

We now wish to provide an explicit formulation for the size of the short squeeze and, in particular, necessary and sufficient conditions for the existence of a short squeeze under the linear inverse demand function of Example~\ref{ex:linear}.
\begin{example}\label{ex:linear-squeeze}
Consider the linear inverse demand function of Example~\ref{ex:linear}.
The size of the short squeeze is given by $\delta = \left[bS - \frac{1 - \mu + 2\alpha}{1+\mu}\right]^+ p_0$ with $S^* = \frac{1 - \mu + 2\alpha}{b(1+\mu)}$.\footnote{The algebraic details of this computation are provided within Appendix~B of the preprint version of this work available at \url{https://arxiv.org/pdf/2102.02176v3.pdf}.}
\end{example}

To conclude this discussion of the behavior of the short squeeze, we wish consider the sensitivity of $\delta$ to the number of short sold shares $S$.  In particular, we are interested in how the size of the short squeeze changes in the number of short sold shares when $S > S^*$.
\begin{corollary}\label{cor:sensitivity}
The size of the short squeeze $\delta$ is nondecreasing in the number of sold assets $S$ with sensitivity:
\begin{align*}
\frac{\partial}{\partial S} \delta &= \frac{\left[p^* - \frac{(1+\alpha)p_0}{1+\mu}\right] f'\left(S - \frac{(1+\alpha)p_0}{(1+\mu)p^*}\left[S - f^{-1}\left(\frac{(1+\alpha)p_0}{1+\mu}\right)\right]\right)}{p^* - \frac{(1+\alpha)p_0}{(1+\mu)p^*}\left[S - f^{-1}\left(\frac{(1+\alpha)p_0}{1+\mu}\right)\right]f'\left(S - \frac{(1+\alpha)p_0}{(1+\mu)p^*}\left[S - f^{-1}\left(\frac{(1+\alpha)p_0}{1+\mu}\right)\right]\right)} \ind{S > S^*} \geq 0. 
\end{align*}
\end{corollary}

\section{Case Studies}\label{sec:cs}
Within this section we wish to consider three separate case studies.  The first two are to study the short squeeze events of early 2021.  The final case study is a speculative example of a heavily shorted stock at the time of this writing.
To simplify these examples we consider constant parameters $\alpha = 0.45$ and $\mu = 0.30$ to conform with regulatory requirements and provide simple heuristic values.  Throughout this section we will solely focus on the linear inverse demand function setting of Example~\ref{ex:linear}.  Notably, this construction provides a threshold $S^* \approx 1.23/b$ inversely proportional to market impacts for the existence of a strictly positive short squeeze.  That is, prices become unstable and jump if the shorted quantity $S$ exceeds $S^*$ so long as external investors inject more than $C^*$ into the stock.
Additionally, to simplify these considerations, we will proceed with an estimated market impact of $b = 2/V$ where $V$ denotes the average daily volume; therefore we consider these case studies with threshold $S^* \approx 0.615V$.  In this way there is, equivalently, a threshold short interest ratio $s^* \approx 0.615$ which we can utilize.  Similarly, the threshold external capital expenditures to trigger the margin call is approximately 6.43\% of the ADV, i.e., $C^* \approx 0.0643Vp_0$.\footnote{All data utilized in this section was collected from Yahoo Finance.}

\subsection{Case studies of r/WallStreetBets}\label{sec:cs-wsb}
In this section we wish to consider two stocks that have been part of a coordinated action from the online community r/WallStreetBets in early 2021.  Specifically we focus on GameStop Corp.\ (GME) and AMC Entertainment Holdings Inc.\ (AMC).  All data utilized in this section was collected so as to be timed prior to the actions of the online community r/WallStreetBets in early 2021.

\begin{example}\label{ex:gme}[GameStop Corp.]
In this first case study, we consider GME stock in early 2021.  As of December 15, 2020 -- prior to the early 2021 price movements for the stock -- there are 69.75 million shares of the stock outstanding; of those shares, 46.89 million are actually floating and available for transactions.  The ADV for all trading days in 2020 was approximately $V = 6.68$ million shares per day.  Critically for this work, as of December 15, 2020, a total of 68.13 million shares were shorted.  That is, the short interest ratio is given by $s = 68.13/6.68 \approx 10.2$.  Notably this short interest ratio is far in excess of $s^* \approx 0.615$.  As such, we find that a strictly positive short squeeze occurs with size $\delta = p_0\left(b sV - \frac{1 - \mu + 2\alpha}{1+\mu}\right) \approx 19.17p_0$, i.e., the price would jump by more than 1900\% due to a short squeeze, at an external capital purchase of $C^* \approx \$7.3$ million.  With a price per share of approximately $p_0 = \$17$/share prior to the short squeeze, we would anticipate a sudden jump in prices to over \$340/share due to the short squeeze; in fact, under such a scenario we compute the clearing price $p^* = \$344.84$/share.  This is a jump in price from approximately \$18.96/share before the short squeeze (due entirely to the external capital purchases of $C^*$).  This is consistent with price movements observed in early 2021 leading up to a spike in prices on January 27, 2021. 
\end{example}

\begin{example}\label{ex:amc}[AMC Entertainment Holdings Inc.]
In this second case study, we consider AMC stock in early 2021.  As of January 15, 2021, there are 287.28 million shares of the stock outstanding; of those shares, 114.94 million are actually floating and available for transactions.  The ADV for all trading days in 2020 was approximately $V = 10.70$ million shares per day.  Critically for this work, as of January 15, 2021, a total of 44.67 million shares were shorted.  That is, the short interest ratio is given by $s = 44.67/10.70 \approx 4.17$.  Notably this short interest ratio is in excess of $s^* \approx 0.615$.  As such, we find that a strictly positive short squeeze occurs with size $\delta = p_0\left(bsV - \frac{1 - \mu + 2\alpha}{1+\mu}\right) \approx 7.11p_0$, i.e., the price would jump by more than 700\% due to a short squeeze, at an external capital purchase of $C^* \approx \$1.6$ million.  With a price per share of approximately $p_0 = \$2.33$/share prior to the short squeeze, we would anticipate a sudden jump in prices to over \$18.90/share due to the short squeeze; in fact, under such a scenario we compute the clearing price $p^* = \$19.18$/share.  This is a jump in price from approximately \$2.60/share before the short squeeze (due entirely to the external capital purchases $C^*$).  This is consistent with price movements observed on in early 2021 with a significant price spike observed on January 27, 2021. 
\end{example}


\subsection{A speculative case study}\label{sec:cs-rad}
In this section we wish to consider a final, counterfactual, case study.  For this example, we wish to study a stock that, as of the time of this writing, has not experienced a short squeeze but which is heavily short sold.  For this task, we consider Rite Aid Corp.\ (RAD) at the end of the first quarter of 2022 to test what kind of short squeeze event could occur under enough external capital purchases.
\begin{example}\label{ex:rad}[Rite Aid Corp.]
In this final case study, we consider RAD stock at the end of the first quarter 2022.  As of March 31, 2022, there are 55.78 million shares of the stock outstanding; of those shares 52.59 million are actually floating and available for transactions.  The ADV for all trading days in the year prior to March 31, 2022 was approximately $V = 2.13$ million shares per day.  Critically for this work, as of March 31, 2022, a total of 17.22 million shares were shorted.  That is, the short interest ratio is given by $s = 17.22/2.13 \approx 8.08$.  Notably, this short interest ratio is in excess of $s^* \approx 0.615$.  As such, we find that a strictly positive short squeeze occurs with size $\delta = p_0\left(bsV - \frac{1 - \mu + 2\alpha}{1+\mu}\right) \approx 14.93p_0$, i.e., the price would jump by almost 1500\% due to a short squeeze, at an external capital purchase of $C^* \approx \$1.2$ million.  With a price per share of $p_0 = \$8.75$/share at market close on March 31, 2022, we would anticipate a sudden jump in price to approximately \$130.64/share due to the short squeeze; in fact, under such a scenario we compute the clearing price $p^* = \$140.39$/share.  This is a jump in price from approximately \$9.76/share before a possible short squeeze (due entirely to the external capital purchase $C^*$).  At the time of this writing, the author is not aware of any attempted short squeeze on RAD stock with the prior high point coming during the actions of r/WallStreetBets in early 2021 when the price reached approximately \$26.30/share.
Notably, in comparing this example with Examples~\ref{ex:gme} and~\ref{ex:amc}, RAD would require less capital purchases in order to trigger a short squeeze than either GME or AMC did in early 2021; furthermore with the size of the short squeeze at approximately 1493\%, this event would be over double the size of the one experienced by AMC and over 75\% of the size of the short squeeze of GME.
\end{example}

\section{Conclusion}\label{sec:conclusion}
In this work we developed a formulation that provides the equilibrium price due to feedback effects of margin calls on short sales.  In doing so, we found two cutoff levels: (i) $C^*$ external capital that needs to be invested into the asset to trigger a margin call and (ii) $S^*$ shares that, if short sold, will trigger a short squeeze and a rapid jump in prices.  We found, numerically, that recent price movements in several stocks can be attributed to the short squeeze as they have been highly short sold -- above $S^*$ -- and, due to increased retail investment, have likely exceeded $C^*$ in external investments; furthermore, we considered another stock that may be poised for a similar squeeze in the near future.  The threshold to achieve a short squeeze $S^*$ is of wider interest as it provides a threshold for which prices can become unstable.  This threshold would be of particular interest to regulators as the consequences are inherently tied to financial stability as, if $S^*$ is exceeded, short squeezes and price volatility is a natural consequence.

{\footnotesize
\bibliographystyle{plain}
\bibliography{bibtex2}
}

\appendix
\section{Proofs of main results}\label{sec:proofs}
\subsection{Proof of Proposition~\ref{prop:exist}}
\begin{proof}
Let $\Phi(p) := f(\frac{C}{p} + [S - \frac{M}{(1+\mu)p}]^+)$.  Then, by monotonicity of the inverse demand function $f$, it must follow that $\Phi(p) \geq f(C/p) > f(0) = p_0$ and $\Phi(p) \leq f(C/p + S) \leq f(C/p_0+S)$ for any $p \geq p_0$.  Therefore any clearing price must exist in the interval $[p_0,f(C/p_0+S)]$.  As the inverse demand functions under considerations are continuous, existence of a clearing price follows immediately by Brouwer's fixed point theorem.
\end{proof}

\subsection{Proof of Lemma~\ref{lemma:p}}
\begin{proof}
To prove this result, we will first demonstrate that there exists a unique positive price $p^* = f(C/p^*)$ under a no-margin call assumption.  Second, provided the prior proposed price results in a margin call, we will show that there is a unique clearing price $p^* = f(S + [C - \frac{M}{1+\mu}]/p^*)$ that is consistent with a margin call $(1+\mu)Sp^* > M$.
\begin{enumerate}
\item\label{proof:p*1} Under the no-margin call assumption, any clearing price $p^*$ must be a root of $p \mapsto g(p) := f(C/p)-p$.  Immediately, $g'(p) = -\frac{C}{p^2}f'(\frac{C}{p}) - 1 < 0$ for any $p > 0$.  By continuity and strict decreasing-ness of $g$, there exists at most one solution of $g(p) = 0$.  Furthermore, $g(p_0) > 0$ and $g(f(C/p_0)) < 0$ implies that there exists at least one solution of $g(p) = 0$, i.e., there exists a unique positive price assuming no margin call is required.

This price $p^*$ is a clearing price of~\eqref{eq:clearing} if and only if $(1+\mu)Sp^* \leq M = (1+\alpha)Sp_0$.  That is, $p^*$ is a clearing price if $p^* \leq \frac{1+\alpha}{1+\mu}p_0$.  Due to the monotonicity of $g$, this holds if and only if $g(\frac{1+\alpha}{1+\mu}p_0) \leq 0$ or, equivalently, $f(\frac{(1+\mu)C}{(1+\alpha)p_0}) \leq \frac{1+\alpha}{1+\mu}p_0$.  By the monotonicity of the inverse demand function $f$, this holds if and only if
\begin{equation*}
C \leq \frac{(1+\alpha)p_0}{1+\mu}f^{-1}\left(\frac{(1+\alpha)p_0}{1+\mu}\right) =: C^*.
\end{equation*}
\item\label{proof:p*2} Consider now that $C > C^*$, i.e., there is a realized margin call.  Before studying the uniqueness of the clearing price in this setting, we will determine that at least one solution $p^*$ must exist; this follows directly from Proposition~\ref{prop:exist} as we are (implicitly) assuming that no solution exists without a margin call.  In fact, any clearing solution in this setting must be such that $p^* > \frac{1+\alpha}{1+\mu}p_0$ or else no margin call would be required as detailed in the above setting.

Consider now the question of uniqueness.  As in the no-margin call case above, any clearing price must be a root of $p \mapsto g(p) := f(S + [C - \frac{M}{1+\mu}]/p) - p$.  There are three possible cases to consider.  First, if $C > \frac{M}{1+\mu}$ then $g$ is strictly decreasing which, as in the no-margin call case above, immediately implies that there is a unique clearing solution $p^*$.  Second, if $C = \frac{M}{1+\mu}$ then $g(p) = f(S) - p$ and trivially the only possible root is $p^* = f(S)$.  Finally, if $C < \frac{M}{1+\mu}$ then $g$ is strictly concave as 
    \[g''(p) = \frac{C - \frac{M}{1+\mu}}{p^3}\left[2f'\left(S + [C - \frac{M}{1+\mu}]/p\right) + \frac{C - \frac{M}{1+\mu}}{p} f''\left(S + [C - \frac{M}{1+\mu}]/p\right)\right] < 0.\]  
Furthermore, $f(\frac{(1+\mu)C}{(1+\alpha)p_0}) > \frac{1+\alpha}{1+\mu}p_0$ by $C > C^*$, i.e., $g(\frac{(1+\alpha)p_0}{1+\mu}) > 0$.  Thus the first possible root such that $(1+\mu)Sp^* > M$ must also be the only clearing solution due to the concavity of $g$.
\end{enumerate}
\end{proof}

\subsection{Proof of Proposition~\ref{prop:sensitivity}}
\begin{proof}
The form for these partial derivatives follows from a direct application of implicit differentiation (recalling $M = (1+\alpha)Sp_0$) provided $p^* + ([C - \frac{M}{1+\mu}\ind{C \geq C^*}]/p^*)f'(\frac{C}{p^*} + [S - \frac{M}{(1+\mu)p^*}]^+) \neq 0$.  We will prove that this denominator (which is common to both terms) is in fact strictly positive and thus providing also the desired monotonicity (recalling $p^* > \frac{(1+\alpha)p_0}{1+\mu}$ if $C > C^*$).
Note that, by the construction of the clearing prices $p^*$, if $S - \frac{M}{(1+\mu)p^*} > 0$ then it must follow that $C > C^*$.
First, assume $C < C^*$.  Then, by Assumption~\ref{ass:idf}, $p^* + \frac{C}{p^*} f'(\frac{C}{p^*}) > 0$ as desired.
Second, assume $C \geq C^*$ and consider the function $p \mapsto g(p) := f(S + [C - \frac{M}{1+\mu}]/p) - p$ as defined in the proof of Lemma~\ref{lemma:p}\eqref{proof:p*2}.  By that proof, $0 > g'(p^*) = -([C - \frac{M}{1+\mu}]/(p^*)^2)f'(S + [C - \frac{M}{1+\mu}]/p^*) - 1$ at the clearing price $p^*$.  By a simple rearranging of terms, it trivially follows that $p^* + ([C - \frac{M}{1+\mu}]/p^*)f'(S + [C - \frac{M}{1+\mu}]/p^*) > 0$ and the proof is complete.
\end{proof}

\subsection{Proof of Lemma~\ref{lemma:squeeze}}
\begin{proof}
First, there can exist a strictly positive short squeeze $\delta > 0$ if and only if $p \mapsto g(p;C,S) = f(S + \frac{1}{p}[C - \frac{1+\alpha}{1+\mu}Sp_0]) - p$ is strictly increasing at $C = C^*$ and $p = \frac{1+\alpha}{1+\mu}p_0$.  (Otherwise $p^*(C)$ would be continuous at $C^*$.)  Taking the derivative of $g$, we find:
    \[g'\left(\frac{1+\alpha}{1+\mu}p_0;C^*,S\right) = -\frac{1+\mu}{(1+\alpha)p_0} \left[f^{-1}\left(\frac{(1+\alpha)p_0}{1+\mu}\right) - S\right] f'\left(f^{-1}\left(\frac{(1+\alpha)p_0}{1+\mu}\right)\right) - 1.\]
Thus $g'(\frac{1+\alpha}{1+\mu}p_0;C^*,S) > 0$ if and only if $S > S^* := \frac{(1+\alpha)p_0}{(1+\mu) f'\left(f^{-1}\left(\frac{(1+\alpha)p_0}{1+\mu}\right)\right)} + f^{-1}\left(\frac{(1+\alpha)p_0}{1+\mu}\right)$ and the proof is completed.
\end{proof}

\subsection{Proof of Corollary~\ref{cor:sensitivity}}
\begin{proof}
This follows from a direct application of $\frac{\partial}{\partial S}p^*$ as provided in Proposition~\ref{prop:sensitivity}.
\end{proof}

\end{document}